\newcommand{\cK}{{\mathcal K}}
\newcommand{\cN}{{\mathcal N}}
\newcommand{\cH}{{\mathcal H}}
\newcommand{\cW}{{\mathcal W}}
\newcommand{\cQ}{{\mathcal Q}}
\newcommand{\cR}{{\mathbb R}}
\newtheorem{definition}{Definition}[section]
\newtheorem{theorem}[definition]{Theorem}
\newtheorem{lemma}[definition]{Lemma}
\newtheorem{remark}[definition]{Remark}
\newtheorem{assumption}{Assumption}[section]
\date{}
\begin{document}
\baselineskip 18pt
\bibliographystyle{plain}
\title[A symmetric AM algorithm for TV minimization]{A symmetric alternating minimization algorithm for total variation minimization}

\author{Yuan Lei}
\address{School of Mathematics, Hunan University, Changsha 410082, China.}
\email{yleimath@hnu.edu.cn}

\author{Jiaxin Xie}
\address{School of Mathematical Sciences, Beihang University, Beijing, 100191, China }
\email{xiejx@buaa.edu.cn}

\begin{abstract}
In this paper, we propose a novel symmetric alternating minimization algorithm to solve a broad class of total variation (TV) regularization problems. Unlike the usual $z^k\to x^k$ Gauss-Seidel cycle, the proposed algorithm performs the special $\overline{x}^{k}\to z^k\to x^k$ cycle. The main idea for our setting is the recent symmetric Gauss-Seidel (sGS) technique which is developed for solving the multi-block convex composite problem.
This idea also enables us to build the equivalence between the proposed method and the well-known accelerated proximal gradient (APG) method.
The faster convergence rate of the proposed algorithm can be directly
obtained from the APG framework and numerical results including image denoising, image deblurring, and analysis sparse recovery problem demonstrate the effectiveness of the new algorithm.
\end{abstract}

\maketitle

\section{Introduction}
Although more modern and excellent techniques which are specifically tailored to image processing have been developed,
total variation (TV) regularization is still a widely used method in the community of applied mathematics and engineering, for its good properties for preserving contours and sharp edges  in objects with spatial structures; see for instance \cite{herrmann2019,hong2019,liuzx2019,rof,tian2019,sardy2019,scherzer2010,zhang-2018,zhangb2019}.
In this paper, we focus on solving a broad class of  TV regularization problems, where the problem shall be reformulated as an unconstrained problem using the penalty decomposition approach \cite{chenxj2016,luzs2015} and then be efficiently solved in the framework of alternating minimization (AM) \cite{cai2019,chambolle,guminov2019,nesterov2015,saha,sunabcd,suntao2019,yang-g,yang-c}.

The last decades witnessed excellent progress and exciting development regarding the efficient algorithms for TV minimization problem, like smoothing-based methods \cite{rof,vogel96,vogel}, iterative shrinkage thresholding (IST) algorithms \cite{beck-fast, beck-fast-ieee,twist}, interior point methods \cite{ngl1}, primal-dual strategies \cite{pda,tian2019}, augmented Lagrangian methods \cite{liuzx2019,wuchunlin}, to mention just a few. One may refer to \cite{Esser2010,scherzer2010,yang-2017} for a brief overview on this topic. Among them, FTVd \cite{yang-g,yang-c} and the alternating direction method of multipliers (ADMM)-based algorithms \cite{salsa,csalsa,tval3,taomin}, also known as a special split Bregman algorithm (SBA) \cite{bregman}, have been widely used.
FTVd is actually an AM-based method. The authors \cite{yang-g,yang-c} first introduced an auxiliary vector to transfer the problem to an unconstrained problem and then solved it by the AM algorithm. Therefore, FTVd inherits the computational simplicity and efficiently of AM, and performs much better than a number of existing methods such as the lagged diffusivity algorithms \cite{chanld,vogel96}, and some Fourier and wavelet shrinkage methods \cite{deelamani}. Unlike FTVd, ADMM-based methods actually deal with a constrained model and have the characteristics of robustness with respect to the regularization parameter \cite{taomin}.

The AM optimization algorithms have been widely known for a long time \cite{chambolle,Ortega1970,saha}. In recent decades the AM algorithms  have become an important kind of method in both convex optimization and engineerings, such as machine learning problems \cite{Andresen2016}, phase retrieval \cite{cai2019}, sparse recovery \cite{xie2018}, and semidefinite programming \cite{sunabcd}. Sublinear $1/k$ convergence rate of the AM algorithms was proved in \cite[Lemma 5.2]{beck-AM}. Despite the same convergence rate as for the (proximal) gradient method, AM algorithms converge faster in practice as they are free of the
choice of the step-size and are adaptive to the local smoothness of the problem. At the same time, there are accelerated proximal gradient (APG) methods that use a momentum term to have a faster convergence rate of $1/k^2$ \cite{Bai2018,beck-fast,sun-14-apg}. Ideally, one wants to find an algorithm that can not only keep the simplicity and efficiently of AM but also share the faster convergence rate.
Recently, a new novel symmetric Gauss-Seidel (sGS) technique was developed by Li, Sun and Toh \cite{li-sgs} for solving the multi-block convex composite optimization problem. This technique is playing increasingly important roles in solving very large problems (e.g., \cite{chenliang,qsdpnal,li-sgs,xiao2018,WangXiao2019}), as it can break down the large scale problems into several smaller ones and solve correspondingly by making full use of its favorable structures.
 In this paper, we extend this technique to the AM-based algorithm. A benefit of the extension is that one can introduce a momentum term to the AM algorithm, and obtain an accelerated symmetric AM method for TV minimization.
The contributions of this paper are as follows.
\begin{itemize}
  \item We propose a symmetric AM algorithm for solving the TV regularization problem. Our algorithm takes the special $\overline{x}^{k}\to z^k\to x^k$ (sGS) iterative scheme rather than the usual Gauss-Seidel $z^k\to x^k$ iterative scheme. 
      Note that in every iteration the algorithm needs to compute the $x$ twice, which could lead to poor performances if updating $x$ is its dominant computational cost, even it shares a faster convergence rate.
      We show that the computation of $\overline{x}^{k}$ can be very  simple and efficient, meaning that the proposed algorithm can keep the simplicity  of the classical AM algorithm.
  \item We show that the proposed algorithm is actually equivalent to the APG method and consequently a faster convergence rate can be easily obtained from the APG framework. In addition, we  show that the proposed algorithm can obtain the $\epsilon$-optimal solution within $O(1/\epsilon^{1.5})$ iterations.
  \item  Numerical examples show the good performance of our proposed algorithm for image denoising, image deblurring, and analysis sparse recovery.
\end{itemize}

This paper is organized as follows. After introducing some preliminaries in Section \ref{sect.2}, we present our algorithm in Section \ref{sect.3}. In Section \ref{sect.4}, we analyze the convergence properties of the proposed algorithm. In Section \ref{sect.6}, numerical experiments and comparisons with existing  solvers are carried out to confirm the effectiveness of the proposed methods. We conclude the paper in Section \ref{sect.7}.

\section{Preliminaries}
\label{sect.2}
\subsection{TV model}
Consider the following ill-posed linear inverse problem
$$
f=\cK x+\eta,
$$
where $\cK$ is a linear operator such as a blur convolution, $\eta$ is additive noise and $f$ is a degradation.
Without loss of generality, we consider images with a square $n$-by-$n$ domain, and treat it as a column vector $x:=(x_1,x_2,\ldots,x_{n^2})^T$.
The TV model, which is a widely used generic minimization model to recover $x$ from $f$ and $\cK$, consists of solving the following lasso-like problem
\begin{equation}
\label{TV}
\min_{x}\bigg\{\Phi(x):=\sum_{i=1}^{n^2}||D_ix||_2+\frac{\mu}{2}||\cK x-f||_2^2\bigg\},
\end{equation}
where $\mu>0$ is a regularization parameter, and $D_i x\in\cR^2$ represents the first-order finite difference of $x$ at pixel $i$ in both horizontal and vertical directions. For $i=1,\ldots,n^2$,
$D_ix:=((D^{(1)}x)_i,(D^{(2)}x)_i)^T\in\cR^2$, where
$$
(D^{(1)}x)_i:=\left\{
                \begin{array}{ll}
                  x_{i+n}-x_i, & \hbox{if $1\leq i\leq n(n-1)$,} \\
                  x_{i-n(n-1)}-x_i, & \hbox{otherwise,}
                \end{array}
              \right.
$$
$$
(D^{(2)}x)_i:=\left\{
                \begin{array}{ll}
                  x_{i+1}-x_i, & \hbox{if $i=n,2n,\ldots,n^2$,} \\
                  x_{i-n+1}-x_i, & \hbox{otherwise.}
                \end{array}
              \right.
$$
For simplicity of notation, we denote $D=((D^{(1)})^T,(D^{(2)})^T)^T$.
We need the following assumption for the convergence analysis of the proposed algorithm.
\begin{assumption}
\label{assumptiom}
$\cN(\cK)\cap\cN(D)=\{0\}$, where $\cN(\cdot)$ represents the null space of a matrix.
\end{assumption}

\subsection{An alternating minimization algorithm}
\label{sect.2-2}

The alternating minimization (AM) method based on the classical quadratic penalty approach was first introduced for solving the TV-based image restoration problem by
\cite[Algorithm 1]{yang-g}, see also \cite{baizj,xiao2012,yang-c,yang-l1}. An auxiliary $z$ is introduced in \eqref{TV} to give the following decomposition transformation
\begin{equation}
\label{TV-f}
\min_{x,z}\bigg\{\Psi(x,z):=\sum_{i=1}^{n^2}||z_i||_2+\frac{\beta}{2} \sum_{i=1}^{n^2}||z_i-D_ix||^2_2+\frac{\mu}{2}||\cK x-f||_2^2\bigg\}.
\end{equation}
Here at each pixel we use $z_i=((z_1)_i,(z_2)_i)^T\in\cR^2$ to approximate $D_ix=(\big (D^{(1)}x)_i, (D^{(2)}x)_i\big)^T\in\cR^2, i=1,\ldots,n^2$ and $\beta>0$ is a penalty parameter.

For a fixed $x$, the first two terms in \eqref{TV-f} are separable with respect to $z_i$, so minimizing \eqref{TV-f} for $z$ is equivalent to solving, for $i=1,2,\ldots,n^2$,
\begin{equation}
\label{z-sub1}
\min_{z_i}\bigg\{||z_i||_2+\frac{\beta}{2}\|z_i-D_ix\|^2_2\bigg\}.
\end{equation}
According to \cite{yang-g,yang-c}, the solution of \eqref{z-sub1} is given explicitly by the two-dimensional shrinkage
\begin{equation}
\label{z-sub}
z_i=\max\bigg\{\|D_ix\|_2-\frac{1}{\beta},0\bigg\}\frac{D_ix}{\|D_ix\|_2}, i=1,2,\ldots,n^2,
\end{equation}
where $0\cdot(0/0)=0$ is assumed. On the other hand, for a fixed $z$, \eqref{TV-f} is quadratic in $x$, and the minimizer $x$ is given by the following normal equations
\begin{equation}
\label{x-sub}
\bigg( D^TD+\frac{\mu}{\beta}\cK^T\cK  \bigg)x=D^Tz+\frac{\mu}{\beta}\cK^Tf.
\end{equation}
 The iterative procedure of AM for solving \eqref{TV-f} with a fixed $\beta$ is shown below.

\bigskip
\centerline{
\fbox{
\parbox{0.98\textwidth}{
{\bf Algorithm 1: An AM method for solving problem \eqref{TV-f}.}\\
Input $f,\cK,\mu>0$, and $\beta>0$. Initialize $x^0=f$. For $k=0,1,\ldots$, iteratively compute:
\begin{compactenum}[\bf Step 1.]
\item  Compute $z^{k+1}$ according to \eqref{z-sub} with $x=x^k$.
\item  Compute $x^{k+1}$ by solving \eqref{x-sub} with $z=z^{k+1}$.
\end{compactenum}
}}}
\bigskip
\begin{remark}
\label{remark_AM}
By Lemma $5.2$ in \cite{beck-AM}, one can establish the $O(1/k)$ convergence rate for Algorithm $1$. That is, let $\{(x^k,z^k)\}$ be the sequence generated by Algorithm $1$, then for any $k\geq 1$,
$$
\Psi(x^k,z^k)-\Psi^*\leq O(\frac{1}{k}),
$$
where $\Psi^*$ is the optimal value of problem \eqref{TV-f}.
\end{remark}

\subsection{The accelerated proximal gradient method}
\label{sect.2-3}

Consider the following general convex optimization model:
\begin{equation}
\label{gen-pro}
\min_{u\in\cR^p}\big\{F(u):=p(u)+q(u)\big\}.
\end{equation}
Here, $p:\cR^m\to(-\infty,+\infty]$ is an extended-valued, proper, closed and convex function (possible nonsmooth); $q:\cR^m\to\cR$ is convex and continuously differentiable with Lipschitz continuous gradient.
Given any symmetric positive definite matrix $\cH$, define $\omega(\cdot,\cdot):\cR^m\times \cR^m\to \cR$ by
$$
\omega_{\cH}(u,w):=q(w)+ \langle \nabla q(w),u-w\rangle+\frac{1}{2}\langle u-w,\cH(u-w)\rangle.
$$
Then for any $ u,w\in \cR^m$, there exists a symmetric positive definite matrix $\cH$ such that
\begin{equation}
\label{inequ}
q(u)\leq \omega_{\cH}(u,w).
\end{equation}
For any $\widehat{u}^1=u^0\in \cR^m$ and $t_1=1$, the $k$-th iteration of the accelerated proximal gradient (APG) method \cite{beck-fast,sun-14-apg} can be reformulated as
\begin{equation}
\label{apg}
\left\{
\begin{array}{l}
u^{k}:=\arg\min_{u\in\cR^m}\big\{p(u)+\omega_{\cH}(u,\widehat{u}^k)\big\},
\\[2mm]
t_{k+1}:=\frac{1+\sqrt{1+4t_k^2}}{2},
\\[2mm]
\widehat{u}^{k+1}:=u^{k}+\frac{t_k-1}{t_{k+1}}(u^{k}-u^{k-1}).
\end{array}
\right.
\end{equation}
Let $\mathcal{H}$ be an self-adjoint positive semidefinite matrix, we define $\|u\|^2_{\mathcal{H}}=\langle u,\mathcal{H}u\rangle$. The APG method has the following $O(1/k^2)$ convergence rate.
\begin{theorem}[\cite{sun-14-apg}, Theorem 2.1]
\label{theorem_apg}
Suppose that the sequence $\{u^k\}$ is generated by \eqref{apg}. Then
$$
F(u^k)-F(u^*)\leq \frac{2\|u^0-u^*\|^2_{\cH}}{(k+1)^2},
$$
where $u^*$ is an optimal solution to \eqref{gen-pro}.
\end{theorem}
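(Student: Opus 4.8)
The plan is to run the classical FISTA-type potential-function argument \cite{beck-fast}, carried out in the metric induced by $\cH$ rather than the Euclidean one. First I would isolate a three-point descent inequality for a single prox-gradient step: for an arbitrary $y\in\cR^m$, writing $v:=\arg\min_{u}\{p(u)+\omega_{\cH}(u,y)\}$, the claim is that
$$
F(w)-F(v)\ \ge\ \tfrac12\|v-y\|_{\cH}^{2}+\langle \cH(y-v),\,w-y\rangle\qquad\text{for all }w\in\cR^m .
$$
This is obtained by adding three facts and regrouping the linear terms: (i) the quadratic majorization $q(v)\le\omega_{\cH}(v,y)$ from \eqref{inequ}; (ii) the gradient inequality $q(w)\ge q(y)+\langle\nabla q(y),w-y\rangle$ from convexity of $q$; and (iii) the optimality condition for $v$, namely $-\nabla q(y)-\cH(v-y)\in\partial p(v)$, together with the subgradient inequality for $p$ at $v$. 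The computation is routine but has to be done carefully because of the $\cH$-inner product.

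Next I would apply this inequality along \eqref{apg} with $y=\widehat{u}^{k}$ and $v=u^{k}$, once with $w=u^{k-1}$ and once with $w=u^{*}$. Setting $\delta_k:=F(u^{k})-F(u^{*})\ge 0$, I multiply the first of the two inequalities by $t_k-1\ge 0$, add the second, multiply through by $t_k$, and use the identity $t_k^{2}-t_k=t_{k-1}^{2}$ built into the recursion for $t_{k+1}$. Completing the square in $\|\cdot\|_{\cH}$, the right-hand side collapses to $\tfrac12\|a_k\|_{\cH}^{2}-\tfrac12\|b_k\|_{\cH}^{2}$, where $a_k:=t_k u^{k}-(t_k-1)u^{k-1}-u^{*}$ and $b_k:=t_k\widehat{u}^{k}-(t_k-1)u^{k-1}-u^{*}$. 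The crucial observation --- the step I expect to be the main obstacle to get exactly right --- is that the momentum update $\widehat{u}^{k+1}=u^{k}+\frac{t_k-1}{t_{k+1}}(u^{k}-u^{k-1})$ is engineered precisely so that $b_k=a_{k-1}$. Granting this, the estimate reads
$$
t_k^{2}\delta_k+\tfrac12\|a_k\|_{\cH}^{2}\ \le\ t_{k-1}^{2}\delta_{k-1}+\tfrac12\|a_{k-1}\|_{\cH}^{2},
$$
so the potential $E_k:=t_k^{2}\delta_k+\tfrac12\|a_k\|_{\cH}^{2}$ is nonincreasing for $k\ge 2$.

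Finally I would handle the base case $k=1$ directly --- there $t_1=1$ and $\widehat{u}^{1}=u^{0}$, so the $w=u^{k-1}$ inequality is vacuous and a short computation gives $E_1\le\tfrac12\|u^{0}-u^{*}\|_{\cH}^{2}$. Combined with monotonicity, $E_k\le\tfrac12\|u^{0}-u^{*}\|_{\cH}^{2}$ for every $k\ge 1$, and since $\|a_k\|_{\cH}^{2}\ge 0$ this yields $\delta_k\le\|u^{0}-u^{*}\|_{\cH}^{2}/(2t_k^{2})$. A one-line induction on the recursion shows $t_k\ge(k+1)/2$ (if $t_k\ge(k+1)/2$ then $t_{k+1}=\tfrac{1+\sqrt{1+4t_k^{2}}}{2}\ge\tfrac{1+(k+1)}{2}$), and inserting this lower bound gives $F(u^{k})-F(u^{*})\le\dfrac{2\|u^{0}-u^{*}\|_{\cH}^{2}}{(k+1)^{2}}$, which is the assertion.
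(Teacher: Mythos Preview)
The paper does not supply a proof of this theorem at all: it is quoted verbatim as Theorem~2.1 of \cite{sun-14-apg} and used as a black box in the proof of Theorem~\ref{remark_apg}. There is therefore no ``paper's own proof'' to compare against.

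Your argument is the standard Beck--Teboulle potential-function proof \cite{beck-fast}, transported from the Euclidean norm to the $\cH$-weighted norm, and it is correct. The three-point inequality, the telescoping via $t_k^2-t_k=t_{k-1}^2$, the identity $b_k=a_{k-1}$ forced by the momentum update, the base case $E_1\le\tfrac12\|u^0-u^*\|_{\cH}^2$, and the bound $t_k\ge(k+1)/2$ all check out. This is exactly the route taken in \cite{sun-14-apg} (and in \cite{beck-fast} for the scalar case $\cH=L\cdot I$), so your proof is not merely an alternative but essentially the same one the cited reference gives.
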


\section{A symmetric alternating minimization algorithm}
\label{sect.3}
In this section, we develop the symmetric alternating minimization algorithm for solving \eqref{TV-f}
and Algorithm $2$ lists the basic iterations of our algorithm. One of the attractive feature of Algorithm $2$ is that it takes the special $\overline{x}^{k}\to z^k\to x^k$ iterative scheme rather than that of AM taken the usual Gauss-Seidel $z^k\to x^k$ iterative scheme. By using this special iterative scheme, we can establish the $O(1/k^2)$  rate of convergence for Algorithm $2$  (see Theorem \ref{remark_apg}).

\bigskip
\centerline{
\fbox{
\parbox{0.98\textwidth}{
{\bf Algorithm 2: A symmetric alternating minimization algorithm for solving problem \eqref{TV-f}.}\\
Input $f,\cK,\mu>0$, and $\beta>0$. Initialize $z^0=D^Tf,\widehat{z}^1=z^0$ and $t_1=1$. For $k=1,2,\ldots$, iteratively execute the following steps:
\begin{compactenum}[\bf Step 1.]
\item  Compute $\overline{x}^{k}$ by solving \eqref{x-sub} with $z=\widehat{z}^{k}$.
\item  Compute $z^{k}$ according to \eqref{z-sub} with $x=\overline{x}^k$.
\item  Compute $x^{k}$ by solving \eqref{x-sub} with $z=z^{k}$.
\item  Set $t_{k+1}=\frac{1+\sqrt{1+4t_k^2}}{2}$, and compute $\widehat{z}^{k+1}=z^{k}+\frac{t_k-1}{t_{k+1}}(z^{k}-z^{k-1})$.
\end{compactenum}
}}}
\bigskip

One may note that Algorithm $2$ has to solve the normal equations \eqref{x-sub} twice per-iteration. That is, the computational cost of Algorithm $2$ will be twice as much as that of Algorithm $1$. Consequently, Algorithm $2$ may be much slower than Algorithm $1$, even though it shares a faster convergent rate (see Theorem \ref{remark_apg}). Fortunately, in following we shall show that  Algorithm $2$ actually has to solve only one normal equations \eqref{x-sub} per-iteration, i.e., the step $1$ in Algorithm $2$ to compute $\overline{x}^{k}$ can be very simple and efficient.
For the convenience of subsequent analysis, we denote
$$\cW :=D^TD+\frac{\mu}{\beta}\cK^T\cK, $$
$b:=\frac{\mu}{\beta}\cK^Tf$ and $\tau_k:=\frac{t_k-1}{t_{k+1}}$. It follows from Assumption \ref{assumptiom} that $\cW$ is nonsingular.

 Note that when $k\geq 2$, the step 1 in Algorithm $2$ for computing $\overline{x}^k$ can be obtained by solving the linear system
\begin{equation}
\label{remk-1}
\cW \overline{x}^k=D^T\widehat{z}^k+b=D^T(z^{k-1}+\tau_{k-1}(z^{k-1}-z^{k-2}))+b.
\end{equation}
From the step 3 in Algorithm $2$ and \eqref{x-sub} we know that $D^T z^k=\cW x^k-b$. As a result, \eqref{remk-1} can be rewritten as
$$
\cW \overline{x}^k= \cW x^{k-1}+\tau_{k-1} \cW(x^{k-1}-x^{k-2}).
$$
Thus the step 1 in Algorithm $2$ for calculating $\overline{x}^k$ can be obtained by a much simpler form
\begin{equation}
\label{daam_1}
\overline{x}^k=\left\{
                \begin{array}{ll}
                  \cW^{-1}(D^T \widehat{z}^k+b), & \hbox{if $k=1,2$;} \\
                  x^{k-1}+\tau_{k-1} (x^{k-1}-x^{k-2}), & \hbox{otherwise.}
                \end{array}
              \right.
\end{equation}
Therefore, when $k>2$, the main cost per-iteration of Algorithm $1$ is in step 3 for solving the normal equation \eqref{x-sub}.

\section{Convergence analysis}
\label{sect.4}
In this section, we establish the convergence results for Algorithm $2$. The following theorem shows that Algorithm $2$ shares a $O(1/k^2)$ rate of convergence.

\begin{theorem}
\label{remark_apg}
Let $(x_{\beta}^*,z_{\beta}^*)$ be an optimal solution to problem \eqref{TV-f} and set \begin{equation}
\label{Q}
\mathcal{Q}=I+D\mathcal{W}^{-1}D^T.
\end{equation}
Suppose that $\{(x^k,z^k)\}$ is the sequence generated by Algorithm $2$. Then for any $k\geq 1$,
\begin{equation}
\label{sgs_apg_2}
\Psi(x^k,z^k)-\Psi(x_{\beta}^*,z_{\beta}^*)\leq \frac{2\beta\|z^0-z_{\beta}^*\|_{\mathcal{Q}}^2}{(k+1)^2}.
\end{equation}
\end{theorem}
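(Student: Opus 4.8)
The plan is to identify Algorithm~2 with the APG iteration \eqref{apg} applied to a one-block reduction of \eqref{TV-f}, and then invoke Theorem~\ref{theorem_apg}. Decompose $\Psi(x,z)=p(z)+\phi(x,z)$, where $p(z):=\sum_{i=1}^{n^2}\norm{z_i}$ is nonsmooth and $\phi(x,z):=\tfrac{\beta}{2}\norm{z-Dx}^2+\tfrac{\mu}{2}\norm{\cK x-f}^2$ is a jointly convex quadratic, and introduce
$$
q(z):=\min_x\phi(x,z),\qquad \psi(z):=p(z)+q(z)=\min_x\Psi(x,z).
$$
Since $\cW$ is nonsingular, $\phi(\cdot,z)$ has the unique minimizer $x(z):=\cW^{-1}(D^{T}z+b)$, which is exactly \eqref{x-sub}; hence $q$ is a convex quadratic, and differentiating while using $\nabla_x\phi(x(z),z)=0$ gives $\nabla q(z)=\beta\bigl(z-Dx(z)\bigr)$ and $\nabla^2 q=\beta\bigl(I-D\cW^{-1}D^{T}\bigr)$. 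As $D\cW^{-1}D^{T}\succeq 0$, we get $0\preceq\nabla^2 q\preceq\beta I\preceq\beta\mathcal{Q}$, so the majorization \eqref{inequ} holds with $\cH:=\beta I$. Finally $\min_z\psi=\min_{x,z}\Psi$, so $z_\beta^*$ minimizes $\psi$, $x_\beta^*=x(z_\beta^*)$, and $\psi(z_\beta^*)=\Psi(x_\beta^*,z_\beta^*)$; likewise Step~3 of Algorithm~2 sets $x^k=x(z^k)$, so $\psi(z^k)=\Psi(x^k,z^k)$.

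The crux is to show that the iterates $\{z^k\}$ and $\{\widehat z^k\}$ of Algorithm~2 coincide with the APG iterates \eqref{apg} for $F=\psi$ and $\cH=\beta I$, started from $u^0=\widehat u^1=z^0=D^{T}f$ and $t_1=1$. The initialization and the updates of $t_{k+1}$ and $\widehat z^{k+1}$ (Step~4) agree with \eqref{apg} term by term, so by induction it suffices to identify Steps~1--2 with the APG subproblem $u^k=\argmin{u}\bigl\{p(u)+\omega_{\beta I}(u,\widehat z^k)\bigr\}$. Expanding $\omega_{\beta I}$ and dropping terms independent of $u$,
$$
u^k=\argmin{u}\Bigl\{p(u)+\tfrac{\beta}{2}\norm{u-\widehat z^k+\tfrac1\beta\nabla q(\widehat z^k)}^2\Bigr\}.
$$
But Step~1 (equivalently \eqref{daam_1}) computes $\overline{x}^k=x(\widehat z^k)=\cW^{-1}(D^{T}\widehat z^k+b)$, so $\widehat z^k-\tfrac1\beta\nabla q(\widehat z^k)=\widehat z^k-\bigl(\widehat z^k-D\overline{x}^k\bigr)=D\overline{x}^k$, and therefore $u^k=\argmin{u}\bigl\{p(u)+\tfrac{\beta}{2}\norm{u-D\overline{x}^k}^2\bigr\}$, which is the separable problem \eqref{z-sub1} with $x=\overline{x}^k$ and has the closed form \eqref{z-sub}; thus $u^k=z^k$. (Step~3 only records $x^k=x(z^k)$, which does not enter the $z$-iteration and is retained solely to make the cheap update \eqref{daam_1} possible.) It is precisely the $\overline{x}^k\to z^k\to x^k$ (sGS) ordering that makes the linearized APG subproblem collapse onto the plain two-dimensional shrinkage \eqref{z-sub}: the preliminary half-step $\overline{x}^k$ is exactly what is needed to form $\nabla q(\widehat z^k)$, so the remaining prox is Euclidean rather than a non-separable $\mathcal{Q}$-weighted one.

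Granting this equivalence, Theorem~\ref{theorem_apg} (with $F=\psi$, $u^*=z_\beta^*$, $\cH=\beta I$) gives $\psi(z^k)-\psi(z_\beta^*)\le 2\norms{z^0-z_\beta^*}_{\beta I}^2/(k+1)^2=2\beta\norm{z^0-z_\beta^*}^2/(k+1)^2$, and since $\mathcal{Q}=I+D\cW^{-1}D^{T}\succeq I$ this is at most $2\beta\norms{z^0-z_\beta^*}_{\mathcal{Q}}^2/(k+1)^2$. Substituting $\psi(z^k)=\Psi(x^k,z^k)$ and $\psi(z_\beta^*)=\Psi(x_\beta^*,z_\beta^*)$ from the first paragraph yields \eqref{sgs_apg_2}.

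The step I expect to be the main obstacle is the equivalence in the second paragraph: one must verify carefully that $\overline{x}^k$ from Step~1 is exactly the ingredient needed to evaluate $\nabla q(\widehat z^k)$, so that the APG subproblem (with the chosen metric $\cH$) reduces \emph{exactly} to the separable shrinkage \eqref{z-sub} with no residual non-separable proximal term---equivalently, that the proximal correction generated by the sGS sweep combines with the curvature $\nabla^2 q=\beta(I-D\cW^{-1}D^{T})$ to leave a purely Euclidean prox. The remaining pieces---the formula for $\nabla q$, the identity $\psi(z)=\Psi(x(z),z)$, the matching of the $t_k$-recursion, and the relaxation $\beta I\preceq\beta\mathcal{Q}$---are routine.
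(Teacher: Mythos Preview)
Your proof is correct, but it proceeds by a genuinely different (and arguably cleaner) route than the paper's. The paper keeps the full two-block variable $u=(x,z)$ and shows that Algorithm~2 is exactly APG applied to \eqref{TVd} with $p(u)=h(z)$, $q(u)=g(x)+\tfrac{\beta}{2}\norm{z-Dx}^2$, and the block metric
\[
\cH=\beta\begin{pmatrix}\cW & -D^T\\ -D & I+D\cW^{-1}D^T\end{pmatrix},
\]
which is precisely the sGS majorization of $\nabla^2 q$; the verification goes through the first-order optimality conditions of the joint subproblem \eqref{apg_DAAM}, and the final bound $\|u^0-u^*\|_{\cH}^2$ is then minimized over the free initialization $x^0$ to obtain the $z$-only bound. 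You instead eliminate $x$ up front via partial minimization, obtaining a one-block problem $\psi(z)=p(z)+q(z)$ with $\nabla^2 q=\beta(I-D\cW^{-1}D^T)\preceq\beta I$, and recognize Steps~1--2 as a plain proximal-gradient step on $\psi$ with $\cH=\beta I$; Step~3 merely records $x^k=x(z^k)$. Your reduction is the Schur-complement shadow of the paper's block $\cH$: both yield the sharper bound $2\beta\norm{z^0-z_\beta^*}^2/(k+1)^2$, which you correctly relax to the stated $\cQ$-norm bound since $\cQ\succeq I$. What the paper's formulation buys is a direct connection to the multi-block sGS framework it emphasizes; what your approach buys is a shorter argument that avoids the block $\cH$ and the infimum-over-$x^0$ step entirely.
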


By utilizing Theorem \ref{remark_apg}, we can establish a convergence result for problem \eqref{TV}, which shows that Algorithm $2$ can obtain an $\epsilon$-optimal solution for problem \eqref{TV} within $O(1/\epsilon^{1.5})$ iterations.
\begin{theorem}
\label{theorem_Daam}
 Let $\epsilon>0$ and let $\{(x^k,z^k)\}$ be the sequence generated by Algorithm $2$ from any initial point $z^0$ with  $\beta$ chosen as
\begin{equation}
\label{theorem_Daam_2}
\beta=\frac{32C}{\epsilon^2}.
\end{equation}
Suppose $x^*$ is an optimal solution of problem \eqref{TV} and the sequence $\{\Psi(x^k,z^k)\}$ is bounded above, i.e., $\Psi(x^k,z^k)\leq C$, where $C>0$ is a constant.
Then an $\epsilon$-optimal solution of \eqref{TV}, i.e., $\Phi(x^k)-\Phi(x^*)\leq \epsilon$, can be obtained by Algorithm $2$  after at most
\begin{equation}
\label{theorem_Daam_1}
K: =\max\bigg\{\frac{16\sqrt{C}\|z^0-z_{\beta}^*\|_{2}}{\epsilon^{1.5}}-1,1\bigg\}
\end{equation}
iterations.
Here we use $(x_{\beta}^*,z_{\beta}^*)$ to denote an optimal solution to problem \eqref{TV-f}.
\end{theorem}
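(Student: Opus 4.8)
The plan is to reduce the claim to Theorem \ref{remark_apg} by carefully bookkeeping the relation between the surrogate objective $\Psi$ of the penalized problem \eqref{TV-f} and the true objective $\Phi$ of \eqref{TV}. The starting point is the elementary chain of inequalities relating $\Phi$ and $\Psi$: for any $x$, if we minimize $\Psi(x,z)$ over $z$ we recover the Moreau-envelope-type smoothing of the TV term, so $\Psi(x, z(x)) \le \Phi(x)$ where $z(x)$ is given by the shrinkage \eqref{z-sub}; and conversely, for any feasible pair $(x,z)$ one has $\Phi(x) \le \Psi(x,z) + \frac{1}{2\beta}\, n^2$ (roughly: replacing $z_i$ by $D_ix$ costs at most $\tfrac{1}{2\beta}$ per pixel, using that the shrinkage is the proximal map of $\|\cdot\|_2$). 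Taking $C$ to be the assumed upper bound on $\Psi(x^k,z^k)$, these two bounds together give
$$
\Phi(x^k) - \Phi(x^*) \;\le\; \Psi(x^k,z^k) - \Psi(x_\beta^*, z_\beta^*) \;+\; \frac{n^2}{2\beta},
$$
where I also used $\Psi(x_\beta^*,z_\beta^*) \le \Psi(x^*, z(x^*)) \le \Phi(x^*)$ since $(x_\beta^*,z_\beta^*)$ minimizes $\Psi$. (A subtlety to get right: whether the constant multiplying $1/\beta$ is $n^2$, or is already folded into the constant $C$; the statement's choice $\beta = 32C/\epsilon^2$ strongly suggests the intended bound is of the form $\Phi(x^k)-\Phi(x^*) \le [\Psi(x^k,z^k)-\Psi^*] + C/(2\beta)$ or similar, so I would track the constant so that the second term is $\le \epsilon/2$ exactly when $\beta \ge $ the stated value — i.e. $C/(2\beta) = C\epsilon^2/(64C) = \epsilon^2/64 \le \epsilon/2$ for $\epsilon \le 32$, which looks slightly off, so the right reading is probably $16C/\beta \le \epsilon/2$, giving $\beta = 32C/\epsilon \cdot (1/\epsilon)$; I will reverse-engineer the precise constant from \eqref{theorem_Daam_2} and state the envelope bound in the form that makes it consistent).

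Next I would control the first term via Theorem \ref{remark_apg}: it gives
$$
\Psi(x^k,z^k) - \Psi(x_\beta^*,z_\beta^*) \;\le\; \frac{2\beta \|z^0 - z_\beta^*\|_{\mathcal{Q}}^2}{(k+1)^2}.
$$
Here I need the bound $\|z^0 - z_\beta^*\|_{\mathcal{Q}}^2 \le 2 \|z^0 - z_\beta^*\|_2^2$: since $\mathcal{Q} = I + D\mathcal{W}^{-1}D^T$ and $\mathcal{W} = D^TD + \frac{\mu}{\beta}\mathcal{K}^T\mathcal{K} \succeq D^TD$, we have $D\mathcal{W}^{-1}D^T \preceq D(D^TD)^{+}D^T$, whose eigenvalues are at most $1$ (it is an orthogonal projection onto the range of $D$), hence $\mathcal{Q} \preceq 2I$. (If $D^TD$ is singular one argues on the range/kernel decomposition; Assumption \ref{assumptiom} guarantees $\mathcal{W}$ is invertible even though $D^TD$ need not be, so this step needs the $\preceq$ comparison rather than literal inverses — I would phrase it via $x^T D\mathcal{W}^{-1}D^T x \le \|Dx'\|^2$ for the appropriate $x'$.) This yields $\Psi(x^k,z^k) - \Psi^* \le \frac{4\beta \|z^0-z_\beta^*\|_2^2}{(k+1)^2}$.

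Finally I combine the two pieces and solve for $k$. Substituting $\beta = 32C/\epsilon^2$, the APG term is $\le \epsilon/2$ as soon as $(k+1)^2 \ge \frac{8\beta\|z^0-z_\beta^*\|_2^2}{\epsilon} = \frac{256 C \|z^0-z_\beta^*\|_2^2}{\epsilon^3}$, i.e. $k+1 \ge \frac{16\sqrt{C}\,\|z^0-z_\beta^*\|_2}{\epsilon^{1.5}}$, which is exactly the threshold $K$ in \eqref{theorem_Daam_1} (with the $\max\{\cdot,1\}$ handling the degenerate case where the threshold is already met at $k=1$). Then $\Phi(x^k)-\Phi(x^*) \le \epsilon/2 + \epsilon/2 = \epsilon$, as desired. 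The main obstacle is the first paragraph: pinning down the exact constant in the Moreau-envelope inequality $\Phi(x) - \Psi(x,z) \le (\text{const})/\beta$ and making sure it propagates to give precisely the constants $32$ and $16$ appearing in \eqref{theorem_Daam_2}–\eqref{theorem_Daam_1}; everything after that is bookkeeping plus the spectral bound $\mathcal{Q}\preceq 2I$, which is routine.
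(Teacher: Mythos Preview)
Your overall structure is right, and your handling of the APG term---including the spectral bound $\mathcal{Q}\preceq 2I$---matches the paper. The gap is exactly where you flagged it: the smoothing-error term. You are searching for a Moreau-envelope bound of the form $\Phi(x^k)-\Psi(x^k,z^k)\leq(\mathrm{const})/\beta$, but that is not the shape of the paper's bound, which is why you cannot make the constants $32$ and $16$ come out. Moreover, your prox-based justification (``replacing $z_i$ by $D_ix$ costs at most $1/(2\beta)$ per pixel'') does not apply to the iterate pair $(x^k,z^k)$: in Algorithm~2, $z^k$ is the shrinkage of $D\overline{x}^{k}$ (Step~2), not of $Dx^k$, so $z^k$ is \emph{not} the proximal point at $Dx^k$.

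The paper's route is different and, crucially, uses the boundedness hypothesis in an essential way. First, by the Lipschitz continuity of $h$ (and since the quadratic penalty in $\Psi$ is nonnegative),
\[
\Phi(x^k)=h(Dx^k)+g(x^k)\leq h(z^k)+g(x^k)+\|z^k-Dx^k\|_2\leq\Psi(x^k,z^k)+\|z^k-Dx^k\|_2.
\]
Second, the hypothesis $\Psi(x^k,z^k)\leq C$ together with $h,g\geq 0$ forces $\tfrac{\beta}{2}\|z^k-Dx^k\|_2^2\leq C$, hence $\|z^k-Dx^k\|_2\leq\sqrt{2C/\beta}$. With $\beta=32C/\epsilon^2$ this equals exactly $\epsilon/4$; combined with your APG bound $\leq\epsilon/2$ for $k\geq K$ one obtains $\Phi(x^k)-\Phi(x^*)\leq 3\epsilon/4<\epsilon$, and the specific constants in the statement fall out. (A uniform $O(1/\beta)$ bound via Young's inequality \emph{is} available for arbitrary $(x,z)$ and would in fact give a better iteration complexity than the theorem claims---but it neither reproduces the stated $\beta$ and $K$ nor uses the hypothesis $\Psi\leq C$ at all, which is the tell that the paper is doing something else.)
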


\begin{remark}
\label{remk-daam}
The condition that $\{\Psi(x^k,z^k)\}$ is bounded above appears to be fairly weak. But unfortunately we are not able to prove this condition. As is noted in the above section, Algorithm $2$ and the APG method are identical in some sense.
Therefore, monotone APG, also known as monotone FISTA \cite{beck-fast-ieee}, can yield the boundness of $\{\Psi(x^k,z^k)\}$.
\end{remark}

\subsection{Proof of Theorem \ref{remark_apg}}
To state conveniently, we will denote $h(z):=\sum_{i=1}^{n^2}\|z_i\|_2$ and $g(x):=\frac{\mu}{2}\|\cK x-f\|_2^2$, then problem \eqref{TV} becomes
\begin{equation}
\label{TVg}
\min_{x}\big\{\Phi(x)=h(Dx)+g(x)\big\}
\end{equation}
and problem \eqref{TV-f} becomes
\begin{equation}
\label{TVd}
\min_{x,z}\bigg\{\Psi(x,z)=h(z)+g(x)+\frac{\beta}{2}\|z-Dx\|^2_2\bigg\}.
\end{equation}

\begin{proof}[Proof of Theorem \ref{remark_apg}]
Firstly, inspired by \cite{chenliang,xiao2018,qsdpnal,li-sgs}, we will show that Algorithm $2$ is equivalent to the APG method \eqref{apg}. Specifically, Algorithm $2$ can be regarded as the APG method applied to problem \eqref{TVd} with $u:=(x,z)$, $p(u):=h(z)$, $q(u):=g(x)+\frac{\beta}{2}\|z-Dx\|^2_2$ and $\cH$ be chosen as
\begin{equation}
\label{h}
\cH:=\beta\left( \begin{array}{cc}
                     \cW & -D^T \\
                      -D & I+D\cW^{-1}D^T \\
                   \end{array}
           \right).
\end{equation}
Note that $g(x)=\frac{\mu}{2}\|\cK x-f\|_2^2$. Then the APG method with $\cH$ chosen as \eqref{h}  for solving problem  \eqref{TVd} takes the following iterations
\begin{equation}
\label{apg_DAAM}
(x^k,z^k):=\arg\min_{x,z}\bigg\{h(z)+g(x)+\frac{\beta}{2}\|z-Dx\|^2_2+\frac{\beta}{2}\| z-\widehat{z}^k\|_{D\cW^{-1}D^T}^2\bigg\}.
\end{equation}

On the other hand, consider the sequences $\{(x^k,z^k)\}$ generated by Algorithm $2$, from the step 1 in Algorithm $2$ we can get
$
\mathcal{W} \overline{x}^k=D^T\widehat{z}^k+b
$. Thus
\begin{equation}
\label{sgs1}
\overline{x}^k=\cW^{-1}(D^T\widehat{z}^k+b).
\end{equation}
By the step 3 in Algorithm $2$ we can obtain
\begin{equation}
\label{sgs2}
\cW x^k=D^T z^k+b,
\end{equation}
and from the step 2 in Algorithm $2$ we know that
\begin{eqnarray}
\label{sgs3}
0&\in&\partial h(z^k)+\beta(z^k-D\overline{x}^k)\nonumber
\\
 &=& \partial h(z^k)+\beta(z^k-D\cW^{-1}(D^T\widehat{z}^k+b))\nonumber
\\
&=& \partial h(z^k)+\beta(z^k-Dx^k)+\beta D\cW^{-1}D^T(z^k-\widehat{z}^k),
\end{eqnarray}
where the second and third equalities are due to \eqref{sgs1} and \eqref{sgs2}, respectively.
Note that \eqref{sgs2} together with \eqref{sgs3} are the first order optimality conditions to problem \eqref{apg_DAAM}. Therefore, for solving problem \eqref{TVd}, the sequences generated by Algorithm $2$  and APG are identical.
Hence, by Theorem \ref{theorem_apg} we know that the sequence $\{u^k:=(x^k,z^k)\}$ generated by Algorithm $2$ satisfies
\begin{equation}
\label{sgs_apg_1}
\Psi(x^k,z^k)-\Psi(x_{\beta}^*,z_{\beta}^*)\leq \frac{2\|u^0-u^*\|^2_{\mathcal{H}}}{(k+1)^2},
\end{equation}
where $u^*:=(x_{\beta}^*,z_{\beta}^*)$ is an optimal solution to problem \eqref{TVd} and $u^0=(x^0,z^0)$ is the initial point.
Note that in Algorithm $2$, the initial point $u^0$ is only relevant to $z^0$ and any choice of the $x^0$ is acceptable. Hence
\begin{equation}
\Psi(x^k,z^k)-\Psi(x_{\beta}^*,z_{\beta}^*)\leq \inf_{x^0\in \mathbb{R}^{n^2}}\frac{2\|u^0-u^*\|^2_{\mathcal{H}}}{(k+1)^2}
=\frac{2\beta\|z^0-z_{\beta}^*\|_{\mathcal{Q}}^2}{(k+1)^2}.
\end{equation}
This completes the proof of Theorem \ref{remark_apg}.
\end{proof}

\subsection{Proof of Theorem \ref{theorem_Daam}}

We begin with the following result.
\begin{lemma}[\cite{tanzhao}, Lemma \uppercase\expandafter{\romannumeral3}.1]
\label{lemma_d}
Let $\Psi^*$ be the optimal value of problem \eqref{TVd} and $\Phi^*$ be the optimal value of problem \eqref{TVg}. Then $\Psi^*\leq\Phi^*$.
\end{lemma}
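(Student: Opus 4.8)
The plan is to exploit the fact that the split reformulation \eqref{TVd} contains, as a feasible subset, the ``diagonal'' copy $\{(x,Dx):x\in\mathbb{R}^{n^2}\}$ of the domain of \eqref{TVg}, and that on this subset the coupling penalty $\frac{\beta}{2}\|z-Dx\|_2^2$ vanishes identically, so that $\Psi$ restricted there agrees exactly with $\Phi$.

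Concretely, I would argue as follows. Fix an arbitrary $x\in\mathbb{R}^{n^2}$ and take $z:=Dx$. Then
$$
\Psi(x,Dx)=h(Dx)+g(x)+\frac{\beta}{2}\|Dx-Dx\|_2^2=h(Dx)+g(x)=\Phi(x).
$$
Since $(x,Dx)$ is feasible for \eqref{TVd}, this gives $\Psi^*\le\Psi(x,Dx)=\Phi(x)$. As $x$ was arbitrary, taking the infimum over $x\in\mathbb{R}^{n^2}$ yields $\Psi^*\le\inf_x\Phi(x)=\Phi^*$, which is the claim.

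There is essentially no hard step here; the one point worth a remark is that phrasing the bound through the infimum avoids having to assume that the minimum in \eqref{TVg} is attained. If one prefers to work with minimizers, note that under Assumption \ref{assumptiom} the objective $\Phi$ is coercive, so \eqref{TVg} does admit a minimizer $x^*$, and the same computation with the feasible point $(x^*,Dx^*)$ gives $\Psi^*\le\Psi(x^*,Dx^*)=\Phi(x^*)=\Phi^*$ directly. Note also that the reverse inequality $\Psi^*\ge\Phi^*$ fails in general, since a minimizer $(x,z)$ of $\Psi$ need not satisfy $z=Dx$; this is exactly why the penalty formulation only solves \eqref{TVg} approximately, and controlling the gap in the other direction (in terms of $\beta$) is what Lemma \ref{lemma_d} is paired with in the proof of Theorem \ref{theorem_Daam}.
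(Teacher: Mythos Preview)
Your argument is correct: restricting $\Psi$ to the diagonal $\{(x,Dx)\}$ kills the penalty term and recovers $\Phi$, so $\Psi^*\le\Phi(x)$ for every $x$, hence $\Psi^*\le\Phi^*$. The paper does not supply its own proof of this lemma at all---it simply cites \cite[Lemma~III.1]{tanzhao}---and the argument there is exactly the one you wrote, so there is nothing further to compare.
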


We next establish the following lemma to bound the difference $\Phi(x^k) -\Phi(x^*)$.

\begin{lemma}
\label{lemma_d2}
Let $\{(x^k,z^k)\}$ be the sequence generated by Algorithm $2$ and $u^*=(x_{\beta}^*,z_{\beta}^*)$ denote the optimal solution of \eqref{TV-f} or \eqref{TVd}. If the sequence $\{\Psi(x^k,z^k)\}$ is bounded above, i.e., $\Psi(x^k,z^k)\leq C$, where $C>0$ is a constant, then
\begin{equation}
\label{lemma_d_1}
\Phi(x^k) -\Phi(x^*)\leq\frac{4\beta\|z^0-z_{\beta}^*\|_{2}^2}{(k+1)^2}+\sqrt{\frac{2C}{\beta}}.
\end{equation}
Here $x^*$ denotes the optimal solution to \eqref{TV} or \eqref{TVg} and $z^0$ is an arbitrary point.
\end{lemma}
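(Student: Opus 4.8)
The plan is to transfer the $O(1/k^2)$ estimate of Theorem~\ref{remark_apg}, which governs the \emph{penalized} gap $\Psi(x^k,z^k)-\Psi(x_\beta^*,z_\beta^*)$, to the \emph{original} objective $\Phi$, at the price of an additive error that vanishes as $\beta\to\infty$. First I would exploit the boundedness hypothesis in the only place it is needed: since $h(z^k)\ge 0$ and $g(x^k)\ge 0$, the bound $\Psi(x^k,z^k)\le C$ forces $\frac{\beta}{2}\|z^k-Dx^k\|_2^2\le C$, i.e.\ $\|z^k-Dx^k\|_2\le\sqrt{2C/\beta}$. Then, by Lemma~\ref{lemma_d}, $\Psi(x_\beta^*,z_\beta^*)=\Psi^*\le\Phi^*=\Phi(x^*)$, so that
\[
\Phi(x^k)-\Phi(x^*)\le\Phi(x^k)-\Psi(x_\beta^*,z_\beta^*)=\big[\Phi(x^k)-\Psi(x^k,z^k)\big]+\big[\Psi(x^k,z^k)-\Psi(x_\beta^*,z_\beta^*)\big],
\]
and, since the $g(x^k)$ terms cancel, the first bracket equals $h(Dx^k)-h(z^k)-\frac{\beta}{2}\|z^k-Dx^k\|_2^2$.

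For the second bracket I would invoke Theorem~\ref{remark_apg} directly, which gives the bound $\frac{2\beta\|z^0-z_\beta^*\|_{\mathcal{Q}}^2}{(k+1)^2}$ with $\mathcal{Q}=I+D\mathcal{W}^{-1}D^T$; it then remains only to pass from the $\mathcal{Q}$-norm to the Euclidean norm, i.e.\ to check $\mathcal{Q}\preceq 2I$, equivalently $D\mathcal{W}^{-1}D^T\preceq I$. This is a Schur-complement observation: by Assumption~\ref{assumptiom} the matrix $\mathcal{W}$ is positive definite, and $\mathcal{W}-D^TD=\frac{\mu}{\beta}\cK^T\cK\succeq 0$, so $\left(\begin{smallmatrix}\mathcal{W} & D^T\\ D & I\end{smallmatrix}\right)\succeq 0$, and its other Schur complement yields $I-D\mathcal{W}^{-1}D^T\succeq 0$. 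Hence the second bracket is at most $\frac{4\beta\|z^0-z_\beta^*\|_2^2}{(k+1)^2}$, which is precisely the first term in \eqref{lemma_d_1}.

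Finally, for the perturbation term $h(Dx^k)-h(z^k)-\frac{\beta}{2}\|z^k-Dx^k\|_2^2$, I would discard the nonpositive quadratic piece, use the triangle inequality for the norm $h$ to write $h(Dx^k)-h(z^k)\le h(Dx^k-z^k)$, and then bound $h(Dx^k-z^k)$ by $\|Dx^k-z^k\|_2\le\sqrt{2C/\beta}$ using the first step; adding the two contributions gives \eqref{lemma_d_1}. I expect this last step to be the main obstacle: one must argue that dropping the quadratic penalty costs nothing and that the comparison between the group-type norm $h$ and the Euclidean norm is sharp enough to land on the stated constant $\sqrt{2C/\beta}$ rather than a dimension-inflated version of it. By contrast, the spectral estimate $\mathcal{Q}\preceq 2I$ and the reduction through Lemma~\ref{lemma_d} are routine.
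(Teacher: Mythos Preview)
Your proposal follows essentially the same route as the paper's proof: both decompose via Lemma~\ref{lemma_d}, invoke Theorem~\ref{remark_apg} for the $O(1/k^2)$ term, and pass from the $\mathcal{Q}$-norm to the Euclidean norm using $\|D\mathcal{W}^{-1}D^T\|_2\le 1$ (your Schur-complement argument makes this explicit, whereas the paper simply states it as a fact). The step you flag as the main obstacle is handled in the paper exactly as you propose---by asserting that $h$ is Lipschitz with constant $L_h=1$, so that $h(Dx^k)-h(z^k)\le\|z^k-Dx^k\|_2$---so your worry about a dimension-inflated constant is not resolved there either.
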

\begin{proof}
The proof follows mainly from the ideas in \cite{beck-smooth,tanzhao}. Since the  sequence $\{\Psi(x^k,z^k)\}$ is bounded from above, i.e.,
$$
h(z^k)+g(x^k)+\frac{\beta}{2}\|z^k-Dx^k\|^2_2\leq  C,
$$
therefore, it follows that
$$
\frac{\beta}{2}\|z^k-Dx^k\|^2_2\leq C,
$$
so
\begin{equation}
\label{proof_d_2}
\|z^k-Dx^k\|_2\leq\sqrt{\frac{2C}{\beta}}.
\end{equation}
Using Lemma \ref{lemma_d} and \eqref{sgs_apg_2}, we have
\begin{equation}
\label{proof_d_3}
\Psi(x^k,z^k)-\Phi(x^*)\leq \frac{2\beta\|z^0-z_{\beta}^*\|_{\cQ}^2}{(k+1)^2},
\end{equation}
where $\cQ$ is defined as \eqref{Q} and $z^0$ is a initial point.
We therefore conclude that
\begin{eqnarray}
\label{proof_d_4}
\Phi(x^k)&=& h(Dx^k)+g(x^k)\nonumber
\\
&=& h(z^k)+g(x^k)+ h(Dx^k)-h(z^k)\nonumber
\\
&\leq&\Psi(x^k,z^k)+\|z^k-Dx^k\|_2\nonumber
\\
&\leq& \Phi(x^*)+\frac{2\beta\|z^0-z_{\beta}^*\|_{\cQ}^2}{(k+1)^2}+\sqrt{\frac{2C}{\beta}}.
\end{eqnarray}
The first inequality follows from the function $h$ is Lipschitz continuous with parameter $L_h=1$, the second inequality is obtained from \eqref{proof_d_2} and \eqref{proof_d_3}. Using the fact that $\|D\cW^{-1}D^T\|_2\leq 1$ together with the inequality \eqref{proof_d_4}, we know \eqref{lemma_d_1} holds. This completes the proof.
\end{proof}

Now we are ready to prove Theorem \ref{theorem_Daam}.

\begin{proof}[Proof of Theorem \ref{theorem_Daam}]
Take $\beta=\frac{32C}{\epsilon^2}$, by Lemma \ref{lemma_d2} we have
$$
\Phi(x^k) -\Phi(x^*)\leq\frac{128C\|z^0-z_{\beta}^*\|_{2}^2}{(k+1)^2\epsilon^2}+\frac{\epsilon}{4}.
$$
To guarantee the inequality $\Phi(x^k) -\Phi(x^*)\leq\epsilon$, it is sufficient that $\frac{128C\|z^0-z_{\beta}^*\|_{2}^2}{(k+1)^2\epsilon^2}\leq \frac{3\epsilon}{4}$ holds. The inequality is satisfied if condition \eqref{theorem_Daam_1} holds, this completes the proof.
\end{proof}

\section{Numerical experiments}
\label{sect.6}
In this section, we present numerical results to demonstrate the effectiveness of Algorithm $2$  and compare it with some state-of-the-art codes that are available and applicable to lasso-like optimization problem with TV regularization. All the computational tasks are implemented by running {\sc Matlab} R2018b in a PC configured with Intel(R) Core(TM) I5-8500 @3.00GHz CPU and 8GB RAM.
To make it easier to compare across different algorithms, we use one uniform stopping criterion for all the algorithms we tested, that is,
\begin{equation}
\label{stop-rule}
\frac{\|x^{k+1}-x^k\|_2}{\max\{1,\|x^k\|_2\}}<\zeta,
\end{equation}
where $\zeta$ denotes the error tolerance.

\subsection{Implementation details}
We test both grayscale and color images in our experiments;
see Figure \ref{fig:test},
\begin{figure}
\centering
\includegraphics[width=0.95\linewidth]{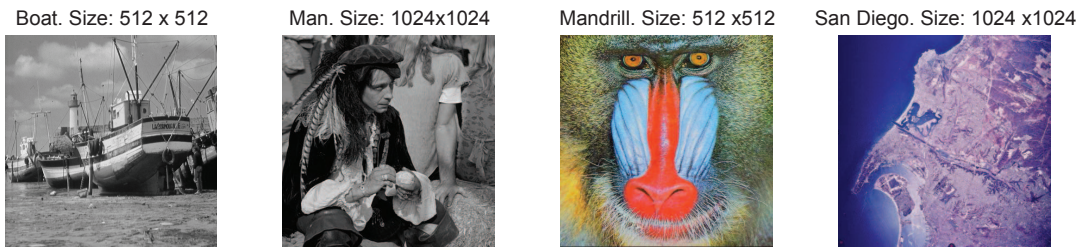}
\caption{The test images used in this paper.}
\label{fig:test}
\end{figure}
where the image sizes are given and they are available as TIFF files\footnote{\url{http://sipi.usc.edu/database/}}.
During our implementation, all the pixels of the original images are first scaled into the range between $0$ and $1$.

In our experiments, several kinds of blurring kernels including Gaussian, motion and average are tested.
All blurring effects are generated using the {\sc Matlab} function {\tt imfilter} with periodic boundary conditions.
To state conveniently, we denote (G, {\tt size}, {\tt sigma}), (M, {\tt len}, {\tt theta}) and (A, {\tt size}) as the Gaussian blur, the motion blur and the average blur, respectively.
For the RGB color images,
we combine various blurring kernels above to generate some within-channel blurs and cross-channel blurs.
In all tests, the additive noise used is Gaussian noise with
 zero mean and various deviations. For simplicity, we denote the  standard deviation as $\sigma$.

It is well known that the solution of \eqref{TV} converges to that of \eqref{TV-f} as $\beta\to \infty$.
However, when $\beta$ is large, our algorithms converge slowly. Thanks to the huge numerical experimentation in \cite{yang-g,yang-c}, we can safely choose $\beta=2^7$ which is a reasonably large value such that the SNR value of the recovered images stabilize.
In addition,  inspired by \cite{yang-g,yang-c}, in our experiment,  we set $\mu=0.05/\sigma^2$ which is based on the observation that $\mu$ should be inversely proportional to the noise variance, while the constant $0.05$ is determined empirically so that the restored images have reasonable SNR and relative errors, here we mention that SNR denotes the signal noise ration.

\subsection{Verifying the acceleration effectiveness of Algorithm $2$}
In Figure \ref{fig:am}, we depict the evolution of the SNR improvement with respect to the number of  iterations and the CPU time for the image man to illustrate the fast convergence properties of Algorithm $2$. Clearly Algorithm $2$ converges faster than the basic AM method (Algorithm $1$), as expected.

\begin{figure}
\centering
\includegraphics[width=0.48\linewidth]{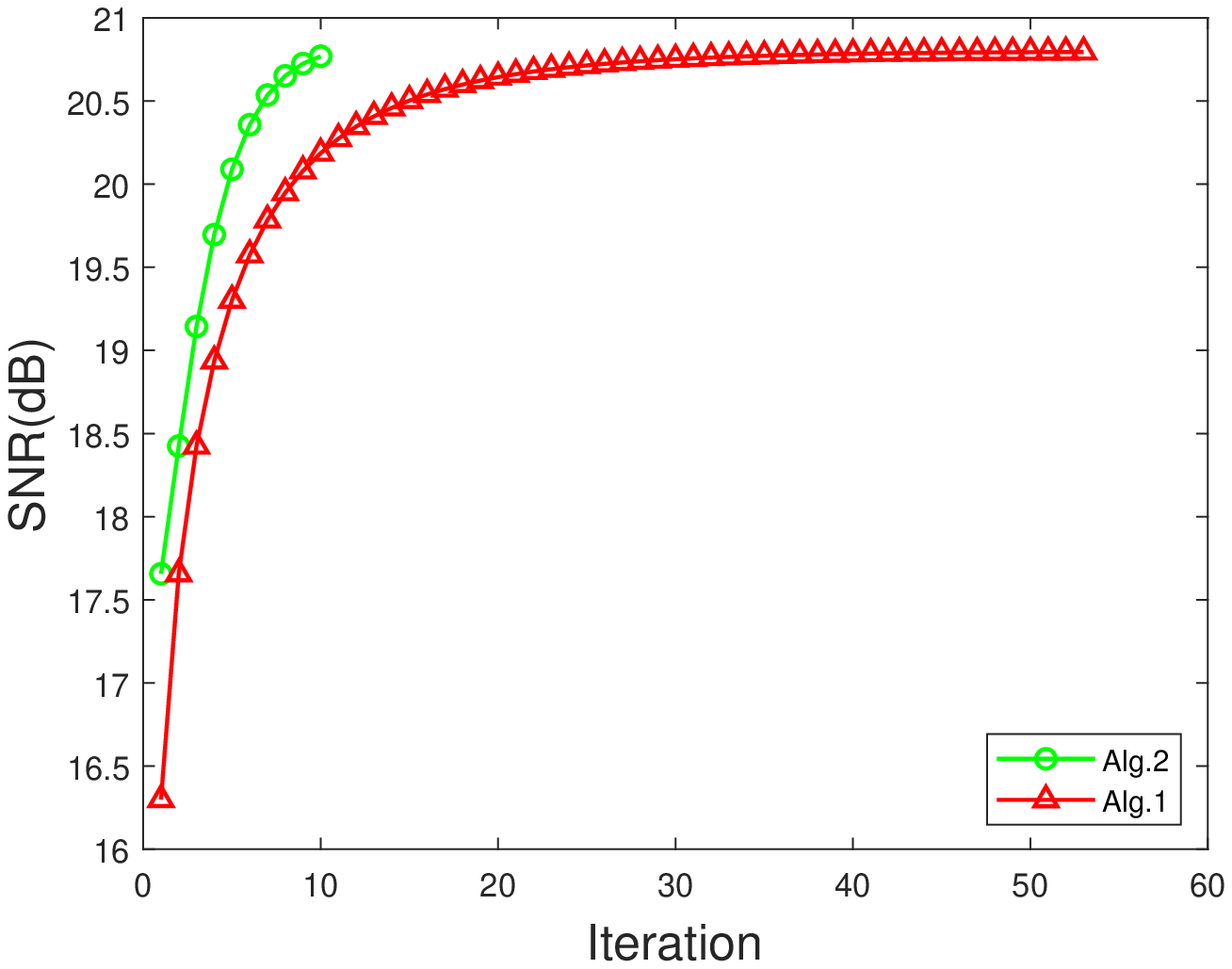}
\includegraphics[width=0.48\linewidth]{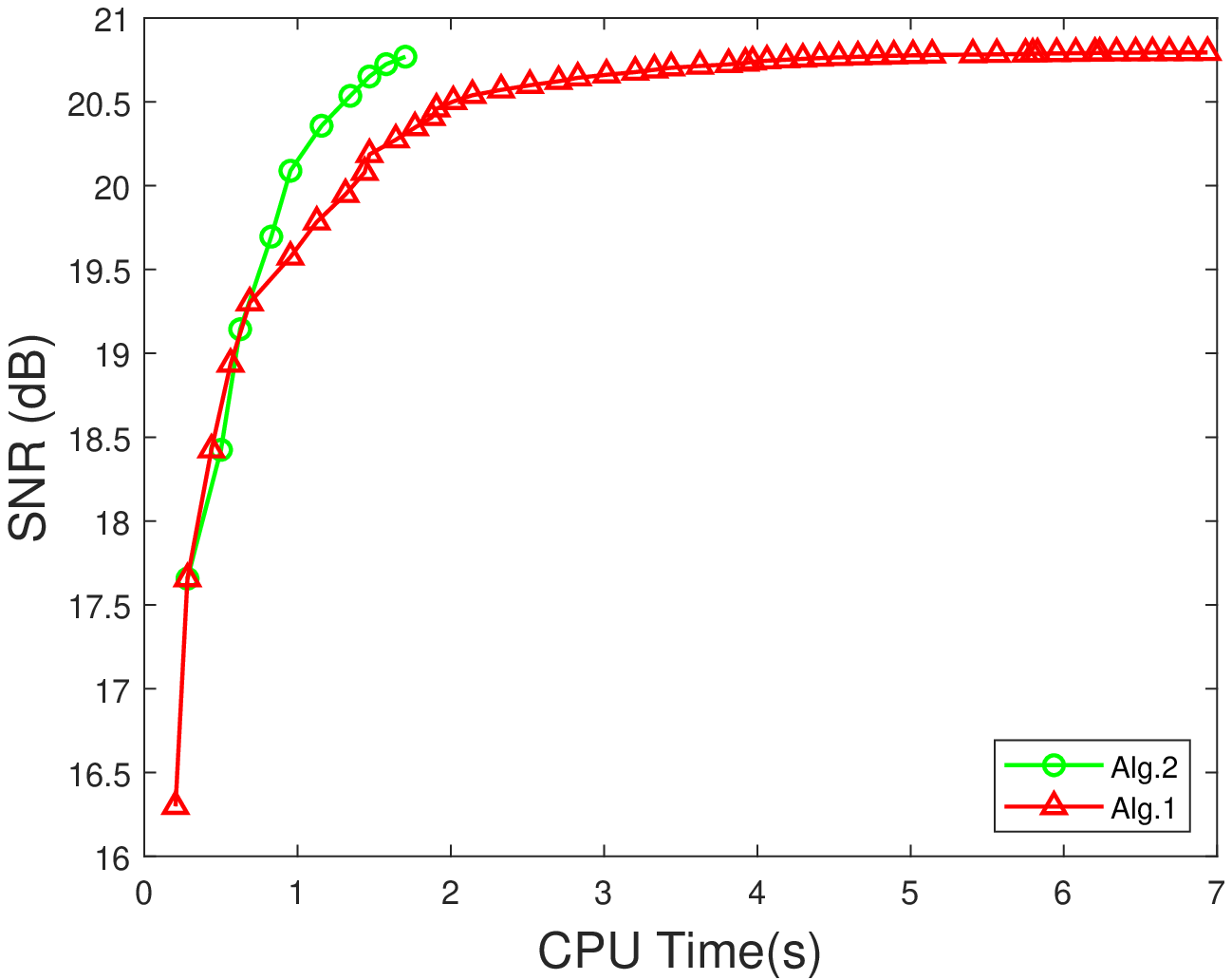}
\caption{Evolution of the SNR improvement with respect to iterations (left) and the CPU time (right) for Algorithms $2$ and $1$ (the classical AM method). The tested image is man and the blur kernel is the motion blur $(M,41,91)$. The noise level is $\sigma=10^{-3}$.}
\label{fig:am}
\end{figure}

\subsection{Comparison with FTVd and ADMM}
We compare the proposed methods with some existing solvers, e.g., FTVd \cite{yang-g}\footnote{\url{http://www.caam.rice.edu/~optimization/L1/ftvd/}} and ADMM-based methods \cite{salsa,csalsa,taomin,tval3,xiejx2018}, such as  (C)SALSA \cite{salsa,csalsa}\footnote{\url{cascais.lx.it.pt/~mafonso/salsa.html}}, TVAL3 \cite{tval3}\footnote{\url{http://www.caam.rice.edu/~optimization/L1/TVAL3/}}.
The efficiency of the ADMM-based methods and FTVd, compared to MFISTA \cite{beck-fast-ieee}, TwIST\cite{twist}, SpaRSA\cite{sparsa}, ALM \cite{liuzx2019,wuchunlin,wuchunlin2011}, LD \cite{chanld,tanzhao} and etc., have been verified by the authors of \cite{salsa,csalsa,taomin,tval3,yang-g}. Therefore, in this subsection we only present detailed numerical results comparing Algorithm $2$ to FTVd and ADMM-based methods (which we call ADMM in this paper).
Recall that FTVd is applied to solve the unconstrained problem \eqref{TV-f}, while ADMM tackles the following constrained problem
\begin{equation}
\label{ADMM_gmodel}
\min_{x,z}\bigg\{ \sum_{i=1}^{n^2}\|z\|_2+\frac{\mu}{2}\|\cK x-f\|^2_2 : z_i=D_ix,i=1,\ldots,n^2  \bigg\},
\end{equation}
which is equivalent to problem \eqref{TV} with $z$ being an auxiliary variable.

We use the codes provided by the authors of \cite{salsa,csalsa,tval3,yang-g} to implement the FTVd and ADMM (but with the stopping criterion \eqref{stop-rule}). Therefore, the values of all involved parameters of FTVd and ADMM remain unchanged.
During our implementation, we terminate all the algorithms with $\zeta=10^{-3}$ and all iterations start with the degraded images.

Tables \ref{table1} and \ref{table2} report the computing time in seconds and the restored SNR  of Algorithm $2$, FTVd and ADMM. The best method for each test case is highlighted in boldface. Where we use a randomized method similar to \cite{wuchunlin,yang-c} to generate the cross-channel blurs.
One can observe that Algorithm $2$ can recover images with the same quality as  FTVd and ADMM, but with fast speed.
Specifically, Algorithm $2$ is about $2$ times faster than FTVd for both gray and color images restoration. Interestingly, one can note that FTVd is better than ADMM for gray images, while ADMM is better than FTVd for color images.

We also compare the algorithms on Gaussian noise with different standard deviations, the blurring kernels and some detailed recovery results are given in Figures  \ref{fig:boat}, \ref{fig:man}, \ref{fig:mandrill} and \ref{fig:san}. We can see that Algorithm $2$ performs more competitively in restoring the same visible and SNR images. In addition, from Figure \ref{fig:mandrill}, one can find that ADMM is more efficient than both Algorithm $2$ and FTVd. In fact, during our implementation, we find that ADMM performs better for color images restoration with low deviation noise.

%

\begin{table}
\setlength{\tabcolsep}{2pt}
\caption{\label{table1} \footnotesize{Numerical comparison of Algorithm $2$ (Alg.2), FTVd and ADMM for images Boat and Man in Figure \ref{fig:test} (average of 10 runs). The noise level is $\sigma=10^{-3}$.} }
\centering
{\small
\begin{tabular}{ c c c c c c c c c c c   }
\hline

\hline  \\
\multicolumn{1}{ c }{Kernel} & &\multicolumn{1}{c }{Images} & &\multicolumn{3}{c }{Time (s)} & &\multicolumn{3}{c }{SNR (dB)}\\
 \cline{5-7} \cline{9-11}
\\[1pt]
& & & & Alg.2&FTVd&ADMM &&  Alg.2& FTVd & ADMM\\
\hline

\hline
\\
\multirow{1}{*}{\hbox{G$(11,9)$}}  & &Boat&&{\bf 0.65}&1.06&1.45& & 16.80&16.91&16.78 \\[2pt]
 & &Man& &{\bf 2.44}&3.84&6.94& & 18.95&19.03&18.86\\[2pt]
 \multirow{1}{*}{\hbox{G$(21,11)$}}  & &Boat&&{\bf 0.59}&1.33&1.75& & 12.92&13.01&12.98 \\[2pt]
 & &Man& &{\bf 2.30}&4.61&9.08& & 15.56&15.65&15.56\\[2pt]
  \multirow{1}{*}{\hbox{G$(31,13)$}}  & &Boat&&{\bf 0.63}&1.26&2.17& & 10.84&10.88&10.81 \\[2pt]
 & &Man& &{\bf 2.33}&4.79&10.61& & 13.72&13.81&13.76\\[2pt]
  \multirow{1}{*}{\hbox{M$(21,45)$}}  & &Boat&&{\bf 0.63}&0.92&1.17& &20.10&20.11&20.11 \\[2pt]
 & &Man& &{\bf 2.21}&3.67&4.71& & 22.58&22.59&20.57\\[2pt]
   \multirow{1}{*}{\hbox{M$(41,90)$}}  & &Boat&&{\bf 0.68}&1.01&1.18& &19.17&19.02&18.99 \\[2pt]
 & &Man& &{\bf 2.37}&3.98&5.67& &20.77&20.74&20.80\\[2pt]
   \multirow{1}{*}{\hbox{M$(61,135)$}}  & &Boat&&{\bf 0.54}&1.04&1.40& &15.83&16.01&15.98 \\[2pt]
 & &Man& &{\bf 2.07}&4.15&6.42& &19.14&19.24&19.14\\[2pt]
    \multirow{1}{*}{\hbox{A$(11)$}}  & &Boat&&{\bf 0.72}&1.07&1.40& &17.11&17.21&17.08 \\[2pt]
 & &Man& &{\bf 2.28}&3.72&6.92& &19.23&19.30&19.14\\[2pt]
     \multirow{1}{*}{\hbox{A$(13)$}}  & &Boat&&{\bf 0.66}&1.10&1.41& &16.31&16.41&16.30 \\[2pt]
 & &Man& &{\bf 2.26}&4.26&7.46& & 18.42&18.50&18.36\\[2pt]
    \multirow{1}{*}{\hbox{A$(15)$}}  & &Boat&&{\bf 0.60}&1.07&1.61& &15.51&15.62&15.55 \\[2pt]
 & &Man& &{\bf 2.29}&4.38&7.51& &17.75&17.83&17.70\\[2pt]
\hline

\hline
\end{tabular}
}
\end{table}

\begin{table}
\setlength{\tabcolsep}{2pt}
\caption{\label{table2} \footnotesize{Numerical comparison of Algorithm $2$ (Alg.2), FTVd and ADMM for images Mandrill and San Diego in Figure \ref{fig:test} (average of 10 runs). The noise level is $\sigma=10^{-3}$. }}
\centering
{\small
\begin{tabular}{ c c c c c c c c c c c  }
\hline

\hline  \\
\multicolumn{1}{ c }{Kernel} & &\multicolumn{1}{c }{Images} & &\multicolumn{3}{c }{Time (s)} & &\multicolumn{3}{c }{SNR (dB)}\\
 \cline{5-7} \cline{9-11}
\\[1pt]
& & & &Alg.2& FTVd&ADMM && Alg.2& FTVd& ADMM \\
\hline

\hline
\\
\multirow{1}{*}{\hbox{G$(11,9)$}}  & &Mandrill&&{\bf3.04}& 6.50&3.92&  & 11.39 & 11.31 & 11.30 \\[2pt]
 & &San Diego& &{\bf 12.61}&25.40&16.70& & 13.68 & 13.68 & 13.66\\[2pt]
\multirow{1}{*}{\hbox{G$(21,11)$}} & & Mandrill &  & {\bf3.09} &7.45 & 4.46 & & 8.58 & 8.53&8.51   \\[2pt]
 & &San Diego & & {\bf 12.99} & 27.67 & 18.60 & & 11.32 & 11.34 & 11.34 \\[2pt]
 \multirow{1}{*}{\hbox{G$(31,13)$}} & & Mandrill &  &{\bf 3.23} & 7.53 & 4.76 & & 7.48 & 7.46 & 7.44   \\[2pt]
 & &San Diego  && {\bf 12.74} & 27.04 & 19.79  & & 10.28 & 10.30 & 10.32 \\[2pt]
 \multirow{1}{*}{\hbox{M$(21,45)$}} & & Mandrill &  & {\bf 3.16} &  4.87 &3.43 & &16.74 & 16.78 &16.75   \\[2pt]
 & &San Diego  && {\bf 12.91}& 21.08 &12.10 & &18.99 & 19.04  &19.04\\[2pt]
 \multirow{1}{*}{\hbox{M$(41,90)$}} & & Mandrill&   & {\bf 3.09} &  5.72 &3.55 && 14.06 & 14.11 &14.08   \\[2pt]
 & &San Diego  && {\bf 13.02} & 24.16  &13.84 & & 16.58 &  16.64 &16.63\\[2pt]
 \multirow{1}{*}{\hbox{M$(61,135)$}} & & Mandrill&    & {\bf 3.05 } &6.27 & 4.00 && 12.20 & 12.19&12.16   \\[2pt]
 & &San Diego  && {\bf 13.00} &25.25 & 14.86 && 14.82 & 14.83 &14.82\\[2pt]
 \multirow{1}{*}{\hbox{A$(11)$}} & & Mandrill &  & {\bf 3.20 } &6.88 & 4.02&& 11.77 & 11.70&11.68   \\[2pt]
 & &San Diego   && {\bf 12.88} &25.30 & 16.82 & & 14.01 & 14.01 &13.99\\[2pt]
 \multirow{1}{*}{\hbox{A$(13)$}} & & Mandrill&  & {\bf 3.07} &6.68 & 4.36 && 10.95 & 10.87&10.86   \\[2pt]
 & &San Diego  && {\bf 13.00} &25.52 &16.29 & & 13.35 & 13.35 &13.33\\[2pt]
 \multirow{1}{*}{\hbox{A$(15)$}} & &Mandrill &   & {\bf 3.22} &  7.37 & 4.36 & & 10.32 & 10.25&10.24   \\[2pt]
 & &San Diego  & &{\bf 13.29} &25.91 &16.77 & & 12.84 & 12.85 &12.84\\[2pt]
 \multirow{1}{*}{Cross-channel} & &Mandrill &   & {\bf 3.22} &6.65 & 4.16 & & 11.07 & 11.13&11.13   \\[2pt]
 & &San Diego  && {\bf 13.55} &26.29 & 15.74 & & 13.89 & 14.01 &14.05\\[2pt]
\hline

\hline
\end{tabular}
}
\end{table}

\begin{figure}
\centering
\includegraphics[width=0.95\linewidth]{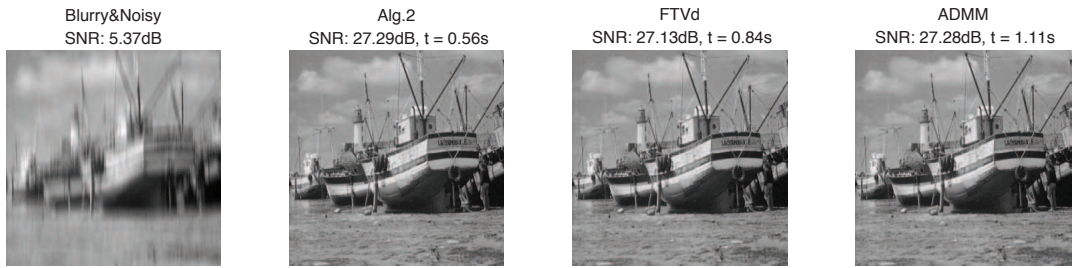}
\caption{\footnotesize{Comparisons between Algorithm $2$ (Alg.2),  FTVd and ADMM for the image boat. The blur kernel is motion blur $(M,41,90)$. The noise level is $\sigma=10^{-4}$.}}
\label{fig:boat}
\end{figure}

\begin{figure}
\centering
\includegraphics[width=0.95\linewidth]{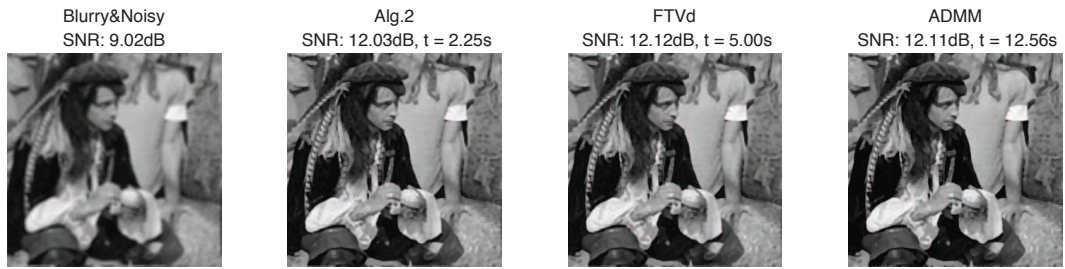}
\caption{ \footnotesize{Comparisons between Algorithm $2$ (Alg.2),  FTVd and ADMM for the image man. The blur kernel is Gaussian blur $(G,21,11)$. The noise level is $\sigma=10^{-2}$.}}
\label{fig:man}
\end{figure}

\begin{figure}
\centering
\includegraphics[width=0.95\linewidth]{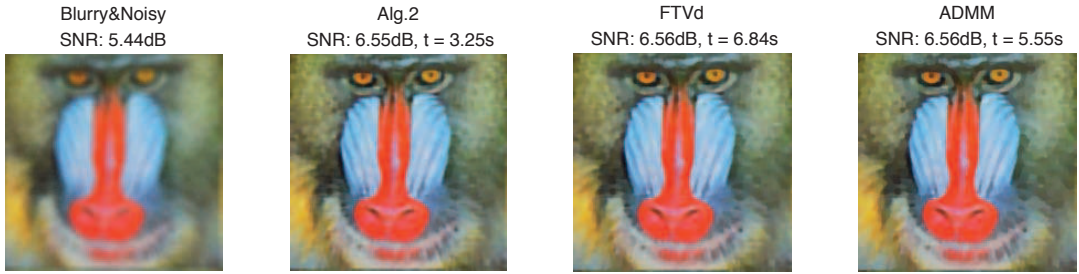}
\caption{Comparisons between  Algorithm $2$ (Alg.2),  FTVd and ADMM for the image mandrill. The blur kernel is (M, 41,90) and the noise level is $\sigma=10^{-2}$. }
\label{fig:mandrill}
\end{figure}

\begin{figure}
\centering
\includegraphics[width=0.95\linewidth]{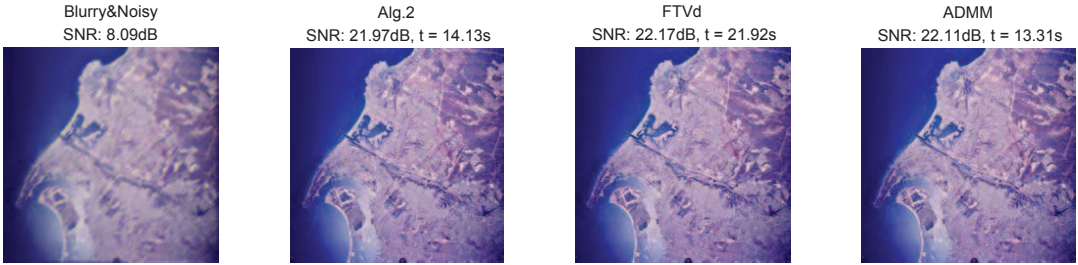}
\caption{Comparisons between  Algorithm $2$ (Alg.2),  FTVd and ADMM for the image San Diego. The blur kernel is cross-channel blur and the noise level is $\sigma=10^{-4}$.}
\label{fig:san}
\end{figure}

\subsection{Comparison with other existing solvers}
In the literature, there are some other efficient solvers applicable to the model \eqref{TV}. In order to conduct a more comprehensive comparison, we consider a more general model derived for the sparse signal recovery problem. In this situation, the linear operator $\mathcal{K}$ in \eqref{TV} will be recognized as a certain measurement matrix and $D$ is a analysis operator such that $Dx$ is a sparse vector. Such an problem arises from  a very active field of recent research named compressed sensing. We refer to \cite{caijf2018,csbook,yall1} for more discussions.

In our test, the measurement matrix $\mathcal{K}$ will be chosen as normalized Gaussian matrices, i.e., its entries are generated from $i.i.d.$ normal distributions $\mathcal{N}(0,1)$ (\texttt{randn(m,n)} in \textsc{Matlab}) with columns being normalized. We use two kinds of the analysis operator:
(\romannumeral1) random tight frames with $D^TD=I$, where $D$ comprises the first $n$ columns of $Q$ obtained by QR factorization on a $p\times n$ Gaussian matrix ;
(\romannumeral2) $n\times n$ discrete cosine transform (DCT) matrices.
The original vector $x$ is obtained by $x=D^Ty$ with $y$ being a $s$-sparse vector (the number of nonzero entries of a certain vector is less than or equal to $s$) and the noisy vector $\eta\in \mathbb{R}^m$ is generated from $\mathcal{N}(0,1)$. After this, we set $f:=\mathcal{K} x+ \sigma\eta$ with $\sigma=10^{-3}$.

We compare Algorithm $2$ with the existing solvers including NESTA \cite{nesta} (vision $1.1$)\footnote{\url{http://statweb.stanford.edu/~candes/nesta/}}, YALL1 \cite{yall1} (vision $1.4$)\footnote{\url{http://yall1.blogs.rice.edu/}} and SFISTA \cite{tanzhao}, which had been shown to be favorable among other algorithms such as the interior point method (e.g., $\ell_1$-$\ell_s$)\cite{l1ls},  the nonlinear conjugate gradient descend (CGD) \cite{CGD} algorithm, the generalized iterative soft-thresholding (GIST) \cite{GIST} algorithm, etc.

We use the codes provided by the authors of \cite{nesta,yall1} to implement the NESTA and YALL1 (but with the stopping criterion \eqref{stop-rule}). The values of all involved parameters of NESTA and YALL1 remain unchanged. The codes for SFISTA is coded by us. We set $\beta:=2^{11}$ for Algorithm $2$. All the algorithms are terminated if \eqref{stop-rule} is satisfied with $\zeta:=10^{-6}$.

Tables \ref{table3} and \ref{table4} list the numerical comparison of these algorithms, where ``Problem'' denotes the problem size $m\times n$, ``Time'' denotes the average CPU time (in seconds) of the average $10$ runs, ``Error'' denotes the average relative error, respectively.
Data in these tables show that Algorithm $2$ is faster than the other solvers to find a solution that has almost the same relative error.
Specifically, Algorithm $2$ is about $2$-$5$ times faster than NESTA and is almost $100$ times faster than SFISTA.

\begin{table}
\setlength{\tabcolsep}{2pt}
\caption{\label{table3} \footnotesize{Comparisons between NESTA, YALL1, SFISTA and Algorithm $2$ (Alg.2)  with random tight frames (average of 10 Runs). }}
\centering
{
\begin{tabular}{ l c c c c c c c c c c  }
\hline

\hline  \\
\multicolumn{1}{ l }{Problem} & &\multicolumn{4}{c }{Time (s)} & &\multicolumn{4}{c }{Error}\\
\cline{1-1} \cline{3-6} \cline{8-11}
\\[1pt]
\hbox{$m/n$} & & NESTA & YALL1 & SFISTA & Alg.2 & & NESTA & YALL1 & SFISTA & Alg.2\\
\hline

\hline
\\
$256/1024$ & &0.63 &0.37 & 33.25 & {\bf 0.27} & & 8.36e-3 & 2.98e-3 & 3.27e-3 & 3.68e-3 \\[2pt]
$256/2048$ & &5.71 &3.44 & 359.93 & {\bf 1.55} & & 1.49e-2 & 4.08e-3 & 5.31e-3 & 6.54e-3 \\[2pt]
$256/4096$ & &40.46 &25.59 & 2033.23 & {\bf 8.13} & & 2.23e-2 & 4.44e-3 & 7.22e-3 & 0.99e-3 \\[2pt]
$256/8192$ & &282.94 &170.21 & 8691.36 & {\bf 45.15} & & 4.49e-2 & 5.98e-3 & 5.31e-3 & 2.14e-3 \\[2pt]
$512/2048$ & &4.21 &2.19 & 242.31 & {\bf 1.54} & & 8.99e-3 & 3.11e-3 & 3.02e-3 & 3.87e-3 \\[2pt]
$512/4096$ & &26.30 &15.91 & 1213.60 & {\bf 6.60} & & 1.49e-2 & 4.10e-3 & 2.62e-3 & 6.49e-3 \\[2pt]
$512/8192$ & &159.33 &95.20 & 7825.55 & {\bf 30.76} & & 2.48e-3 & 4.85e-3 & 7.88e-3 & 1.12e-2 \\[2pt]
$1024/2048$ & &2.76 &2.17 & 108.56 & {\bf 1.68} & & 6.67e-3 & 2.62e-3 & 2.75e-3 & 2.91e-3 \\[2pt]
$1024/4096$ & &16.98 &8.90 & 930.31 & {\bf 6.03} & & 9.26e-3 & 3.12e-3 & 3.38e-3 & 3.87e-3 \\[2pt]
$1024/8192$ & &100.02 &61.59 & 5859.97 & {\bf 24.88} & & 1.41e-2 & 3.74e-3 & 5.68e-3 & 5.98e-3 \\[2pt]
\hline

\hline
\end{tabular}
}
\end{table}


\begin{table}
\setlength{\tabcolsep}{2pt}
\caption{\label{table4} \footnotesize{Comparisons between NESTA, YALL1, SFISTA and Algorithm $2$ (Alg.2)  with with DCT matrices (average of 10 Runs). }}
\centering
{
\begin{tabular}{ l c c c c c c c c c c  }
\hline

\hline  \\
\multicolumn{1}{ l }{Problem} & &\multicolumn{4}{c }{Time (s)} & &\multicolumn{4}{c }{Error}\\
\cline{1-1} \cline{3-6} \cline{8-11}
\\[1pt]
\hbox{$m/n$} & & NESTA & YALL1 & SFISTA & Alg.2 & & NESTA & YALL1 & SFISTA & Alg.2\\
\hline

\hline
\\
$256/1024$ & &0.48 &0.18 & 20.60 & {\bf 0.15} & & 8.77e-2 & 3.11e-3 & 3.48e-3 & 3.89e-3 \\[2pt]
$256/2048$ & &1.05 &0.44 & 42.44 & {\bf 0.23} & & 1.48e-2 & 4.12e-3 & 5.31e-3 & 6.52e-3 \\[2pt]
$256/4096$ & &4.13 &1.70 & 108.60 & {\bf 0.64} & & 2.47e-2 & 4.97e-3 & 8.59e-3 & 1.13e-2 \\[2pt]
$256/8192$ & &14.92 &6.37 & 277.25 & {\bf 1.75} & & 3.88e-2 & 4.69e-3 & 1.30e-2 & 1.80e-2 \\[2pt]
$512/2048$ & &1.20 &0.50 & 44.29 & {\bf 0.47} & & 8.89e-3 & 3.00e-3 & 3.06e-3 & 3.77e-3 \\[2pt]
$512/4096$ & &4.93 &2.05 & 160.44 & {\bf 1.01} & & 1.43e-2 & 3.08e-3 & 4.31e-3 & 6.13e-3 \\[2pt]
$512/8192$ & &17.19 &7.48 & 409.21 & {\bf 2.60} & & 2.37e-3 & 4.65e-3 & 9.18e-3 & 1.06e-3 \\[2pt]
$1024/2048$ & &1.36 &{\bf 0.78} & 33.07& 0.89 & & 6.76e-3 & 2.66e-3 & 2.88e-3 & 2.95e-3 \\[2pt]
$1024/4096$ & &5.19 &2.06 & 163.55 & {\bf 1.79} & & 9.64e-3 & 3.23e-3 &3.08e-3 & 4.03e-3 \\[2pt]
$1024/8192$ & &19.17 &8.22 & 563.51 & {\bf 3.79} & & 1.42e-2 & 3.78e-3 & 4.83e-3 & 6.04e-3 \\[2pt]
\hline

\hline
\end{tabular}
}
\end{table}

%
%
%
%
%
%
%

\section{Conclusions}
\label{sect.7}
In this paper, we proposed a new symmetric AM algorithm for total variation minimization.
The proposed algorithm can not only keep the computational simplicity of AM, but also share a fast convergence rate.
Convergence of the proposed algorithm is established under the equivalence built between the proposed algorithm and the APG method.
Numerical results, including comparison with some popular solvers, shown that our algorithm is very efficient.
We believe that the proposed algorithm can be extended to a number of models involving TV regularization, such as TV-L1 \cite{taomin,xu2014,yang-l1}, TV-based Poisson noise removal \cite{zhang-2018}.

\end{document}